\newtheorem{thm}{Theorem}
\newtheorem{remrk}{Remark}
\newtheorem{conject}{Conjecture}
\begin{document}

\title{Modulation Codes for Flash Memory Based on Load-Balancing Theory}

\author{Fan Zhang and Henry D. Pfister\thanks{This work was supported in part by the National Science Foundation under Grant No. 0747470.}\\
{\normalsize Department of Electrical and Computer Engineering,
Texas A\&M University }\\
{\normalsize \{fanzhang,hpfister\}@tamu.edu}}

\maketitle
\thispagestyle{empty}\pagestyle{empty}
\begin{abstract}
In this paper, we consider modulation codes for practical multilevel
flash memory storage systems with $q$ cell levels. Instead of maximizing
the lifetime of the device \cite{Ajiang-isit07-01,Ajiang-isit07-02,Yaakobi_verdy_siegel_wolf_allerton08,Finucane_Liu_Mitzenmacher_aller08},
we maximize the average amount of information stored per cell-level,
which is defined as storage efficiency. Using this framework, we show
that the worst-case criterion \cite{Ajiang-isit07-01,Ajiang-isit07-02,Yaakobi_verdy_siegel_wolf_allerton08}
and the average-case criterion \cite{Finucane_Liu_Mitzenmacher_aller08}
are two extreme cases of our objective function. A self-randomized
modulation code is proposed which is asymptotically optimal, as $q\rightarrow\infty$,
for an arbitrary input alphabet and i.i.d. input distribution. 

In practical flash memory systems, the number of cell-levels $q$
is only moderately large. So the asymptotic performance as $q\rightarrow\infty$
may not tell the whole story. Using the tools from load-balancing
theory, we analyze the storage efficiency of the self-randomized modulation
code. The result shows that only a fraction of the cells are utilized
when the number of cell-levels $q$ is only moderately large. We also
propose a load-balancing modulation code, based on a phenomenon known
as {}``the power of two random choices'' \cite{Mitzenmacher96thepower},
to improve the storage efficiency of practical systems. Theoretical
analysis and simulation results show that our load-balancing modulation
codes can provide significant gain to practical flash memory storage
systems. Though pseudo-random, our approach achieves the same load-balancing
performance, for i.i.d. inputs, as a purely random approach based
on the power of two random choices. 
\end{abstract}

\section{Introduction\label{sec:Introduction}}

Information-theoretic research on capacity and coding for write-limited
memory originates in \cite{Rivest_Shamir_84}, \cite{Fiat_Shamir_it84},
\cite{Heegard_Gamal_it83} and \cite{wwzk_belllab_report84}. In \cite{Rivest_Shamir_84},
the authors consider a model of write-once memory (WOM). In particular,
each memory cell can be in state either 0 or 1. The state of a cell
can go from 0 to 1, but not from 1 back to 0 later. These write-once
bits are called \emph{wits}. It is shown that, the efficiency of storing
information in a WOM can be improved if one allows multiple rewrites
and designs the storage/rewrite scheme carefully. 

Multilevel flash memory is a storage technology where the charge level
of any cell can be easily increased, but is difficult to decrease.
Recent multilevel cell technology allows many charge levels to be
stored in a cell. Cells are organized into blocks that contain roughly
$10^{5}$ cells. The only way to decrease the charge level of a cell
is to erase the whole block (i.e., set the charge on all cells to
zero) and reprogram each cell. This takes time, consumes energy, and
reduces the lifetime of the memory. Therefore, it is important to
design efficient rewriting schemes that maximize the number of rewrites
between two erasures \cite{Ajiang-isit07-01}, \cite{Ajiang-isit07-02},
\cite{Yaakobi_verdy_siegel_wolf_allerton08}, \cite{Finucane_Liu_Mitzenmacher_aller08}.
The rewriting schemes increase some cell charge levels based on the
current cell state and message to be stored. In this paper, we call
a rewriting scheme a \emph{modulation code}. 

Two different objective functions for modulation codes are primarily
considered in previous work: (i) maximizing the number of rewrites
for the worst case \cite{Ajiang-isit07-01,Ajiang-isit07-02,Yaakobi_verdy_siegel_wolf_allerton08}
and (ii) maximizing for the average case \cite{Finucane_Liu_Mitzenmacher_aller08}.
As Finucane et al. \cite{Finucane_Liu_Mitzenmacher_aller08} mentioned,
the reason for considering average performance is the averaging effect
caused by the large number of erasures during the lifetime of a flash
memory device. Our analysis shows that the worst-case objective and
the average case objective are two extreme cases of our optimization
objective. We also discuss under what conditions each optimality measure
makes sense. 

In previous work (e.g., \cite{Finucane_Liu_Mitzenmacher_aller08,Ajiang-isit07-02,Ajiang-isit08-01,Yaakobi_verdy_siegel_wolf_allerton08}),
many modulation codes are shown to be asymptotically optimal as the
number of cell-levels $q$ goes to infinity. But the condition that
$q\rightarrow\infty$ can not be satisfied in practical systems. Therefore,
we also analyze asymptotically optimal modulation codes when $q$
is only moderately large using the results from load-balancing theory
\cite{load_balancing_94,Mitzenmacher96thepower,Raab-Steger-98}. This
suggests an enhanced algorithm that improves the performance of practical
system significantly. Theoretical analysis and simulation results
show that this algorithm performs better than other asymptotically
optimal algorithms when $q$ is moderately large.

The structure of the paper is as follows. The system model and performance
evaluation metrics are discussed in Section \ref{sec:Optimality-Measure}.
An asymptotically optimal modulation code, which is universal over
arbitrary i.i.d. input distributions, is proposed in Section \ref{sub:Another-rewriting-algorithm}.
The storage efficiency of this asymptotically optimal modulation code
is analyzed in Section \ref{sec:An-Enhanced-Algorithm}. An enhanced
modulation code is also presented in Section \ref{sec:An-Enhanced-Algorithm}.
The storage efficiency of the enhanced algorithm is also analyzed
in Section \ref{sec:An-Enhanced-Algorithm}. Simulation results and
comparisons are presented in Section \ref{sec:Simulation-Results}.
The paper is concluded in Section \ref{sec:Conclusion}.

\section{System Model\label{sec:Optimality-Measure}}

\subsection{System Description}

Flash memory devices usually rely on error detecting/correcting codes
to ensure a low error rate. So far, practical systems tend to use
Bose-Chaudhuri-Hocquenghem (BCH) and Reed-Solomon (RS) codes. The
error-correcting codes (ECC's) are used as the outer codes while the
modulation codes are the inner codes. In this paper, we focus on the
modulation codes and ignore the noise and the design of ECC for now. 

Let us assume that a block contains $n\times N$ $q$-level cells
and that $n$ cells (called an $n$-cell) are used together to store
$k$ $l$-ary variables (called a $k$-variable). A block contains
$N$ $n$-cells and the $N$ $k$-variables are assumed to be i.i.d.
random variables. We assume that all the $k$-variables are updated
together randomly at the same time and the new values are stored in
the corresponding $n$-cells. This is a reasonable assumption in a
system with an outer ECC. We use the subscript $t$ to denote the
time index and each rewrite increases $t$ by 1. When we discuss a
modulation code, we focus on a single $n$-cell. (The encoder of the
modulation code increases some of the cell-levels based on the current
cell-levels and the new value of the $k$-variable.) Remember that
cell-levels can only be increased during a rewrite. So, when any cell-level
must be increased beyond the maximum value $q-1$, the whole block
is erased and all the cell levels are reset to zero. We let the maximal
allowable number of block-erasures be $M$ and assume that after $M$
block erasures, the device becomes unreliable. 

Assume the $k$-variable written at time $t$ is a random variable
$x_{t}$ sampled from the set $\{0,1,\cdots,l^{k}-1\}$ with distribution
$p_{X}(x)$. For convenience, we also represent the $k$-variable
at time $t$ in the vector form as $\bar{x}_{t}\in\mathbb{Z}_{l}^{k}$
where $\mathbb{Z}_{l}$ denotes the set of integers modulo $l$. The
cell-state vector at time $t$ is denoted as $\bar{s}_{t}=(s_{t}(0),s_{t}(1),\ldots,s_{t}(n-1))$
and $s_{t}(i)\in\mathbb{Z}_{q}$ denotes the charge level of the $i$-th
cell at time $t.$ When we say $\bar{s}_{i}\succeq\bar{s}_{j},$ we
mean $s_{i}(m)\ge s_{j}(m)$ for $m=0,1,,\ldots,n-1.$ Since the charge
level of a cell can only be increased, continuous use of the memory
implies that an erasure of the whole block will be required  at some
point. Although writes, reads and erasures can all introduce noise
into the memory, we neglect this and assume that the writes, reads
and erasures are noise-free. 

Consider writing information to a flash memory when encoder knows
the previous cell state $\bar{s}_{t-1},$ the current $k$-variable
$\bar{x}_{t}$, and an encoding function $f:\mathbb{Z}_{l}^{k}\times\mathbb{Z}_{q}^{n}\rightarrow\mathbb{Z}_{q}^{n}$
that maps $\bar{x}_{t}$ and $\bar{s}_{t-1}$ to a new cell-state
vector $\bar{s}_{t}$. The decoder only knows the current cell state
$\bar{s}_{t}$ and the decoding function $g:\mathbb{Z}_{q}^{n}\rightarrow\mathbb{Z}_{l}^{k}$
that maps the cell state $\bar{s}_{t}$ back to the variable vector
$\bar{\hat{x}}_{t}$. Of course, the encoding and decoding functions
could change over time to improve performance, but we only consider
time-invariant encoding/decoding functions for simplicity.

\subsection{Performance Metrics}

\subsubsection{Lifetime v.s. Storage Efficiency}

The idea of designing efficient modulation codes jointly to store
multiple variables in multiple cells was introduced by Jiang \cite{Ajiang-isit07-01}.
In previous work on modulation codes design for flash memory (e.g.
\cite{Ajiang-isit07-01}, \cite{Ajiang-isit07-02}, \cite{Yaakobi_verdy_siegel_wolf_allerton08},
\cite{Finucane_Liu_Mitzenmacher_aller08}), the lifetime of the memory
(either worst-case or average) is maximized given fixed amount of
information per rewrite. Improving storage density and extending the
lifetime of the device are two conflicting objectives. One can either
fix one and optimize the other or optimize over these two jointly.
Most previous work (e.g., \cite{Finucane_Liu_Mitzenmacher_aller08,Ajiang-isit07-02,Ajiang-isit08-01,Yaakobi_verdy_siegel_wolf_allerton08})
takes the first approach by fixing the amount of information for each
rewrite and maximizing the number of rewrites between two erasures.
In this paper, we consider the latter approach and our objective is
to maximize the total amount of information stored in the device until
the device dies. This is equivalent to maximizing the average (over
the $k$-variable distribution $p_{X}(x)$) amount of information
stored per cell-level, \[
\gamma\triangleq E\left(\frac{\sum_{i=1}^{R}I_{i}}{n(q-1)}\right),\]
 where $I_{i}$ is the amount of information stored at the $i$-th
rewrite, $R$ is the number of rewrites between two erasures, and
the expectation is over the $k$-variable distribution. We also call
$\gamma$ as \emph{storage efficiency}.

\subsubsection{Worst Case v.s. Average Case}

In previous work on modulation codes for flash memory, the number
of rewrites of an $n$-cell has been maximized in two different ways.
The authors in \cite{Ajiang-isit07-01,Ajiang-isit07-02,Yaakobi_verdy_siegel_wolf_allerton08}
consider the worst case number of rewrites and the authors in \cite{Finucane_Liu_Mitzenmacher_aller08}
consider the average number of rewrites. As mentioned in \cite{Finucane_Liu_Mitzenmacher_aller08},
the reason for considering the average case is due to the large number
of erasures in the lifetime of a flash memory device. Interestingly,
these two considerations can be seen as two extreme cases of the optimization
objective in (\ref{eq:opt}).

Let the $k$-variables be a sequence of i.i.d. random variables over
time and all the $n$-cells. The objective of optimization is to maximize
the amount of information stored until the device dies. The total
amount of information stored in the device\footnote{There is a subtlety here. If the $n$-cell changes to the same value, should it count as stored information? Should this count as a rewrite? This formula assumes that it counts as a rewrite, so that $l^k$ values (rather than $l^k -1$) can be stored during each rewrite.}
can be upper-bounded by \begin{equation}
W=\sum_{i=1}^{M}R_{i}\log_{2}(l^{k})\label{eq:total_info_ub}\end{equation}
where $R_{i}$ is the number of rewrites between the $(i-1)$-th and
the $i$-th erasures. Note that the upper bound in  (\ref{eq:total_info_ub})
is achievable by uniform input distribution, i.e., when the input
$k$-variable is uniformly distributed over $\mathbb{Z}_{l^{k}}$,
each rewrite stores $\log_{2}(l^{k})=k\log_{2}l$ bits of information.
Due to the i.i.d. property of the input variables over time, $R_{i}$'s
are i.i.d. random variables over time. Since $R_{i}$'s are i.i.d.
over time, we can drop the subscript $i$. Since $M$, which is the
maximum number of erasures allowed, is approximately on the order
of $10^{7}$, by the law of large numbers (LLN), we have \[
W\approx ME\left[R\right]k\log_{2}(l).\]
Let the set of all valid encoder/decoder pairs be \[
\mathcal{Q}=\left\{ f,g|\bar{s}_{t}=f(\bar{s}_{t-1},\bar{x}_{t}),\bar{x}_{t}=g(\bar{s}_{t}),\bar{s}_{t}\succeq\bar{s}_{t-1}\right\} ,\]
where $\bar{s}_{t}\succeq\bar{s}_{t-1}$ implies the charge levels
are element-wise non-decreasing. This allows us to treat the problem
\[
\max_{f,g\in\mathcal{Q}}W,\]
as the following equivalent problem \begin{equation}
\max_{f,g\in\mathcal{Q}}E\left[R\right]k\log_{2}(l).\label{eq:opt2-1}\end{equation}

Denote the maximal charge level of the $i$-th $n$-cell at time $t$
as $d_{i}(t)$. Note that time index $t$ is reset to zero when a
block erasure occurs and increased by one at each rewrite otherwise.
Denote the maximal charge level in a block at time $t$ as $d(t),$
which can be calculated as $d(t)=\max_{i}d_{i}(t).$ Define $t_{i}$
as the time when the $i$-th $n$-cell reaches its maximal allowed
value, i.e., $t_{i}\triangleq\min\{t|d_{i}(t)=q\}$. We assume, perhaps
naively, that a block-erasure is required when any cell within a block
reaches its maximum allowed value. The time when a block erasure is
required is defined as $T\triangleq\min_{i}t_{i}.$ It is easy to
see that $E\left[R\right]=NE\left[T\right],$ where the expectations
are over the $k$-variable distribution. So maximizing $E\left[T\right]$
is equivalent to maximizing $W$. So the optimization problem (\ref{eq:opt2-1})
can be written as the following optimization problem \begin{equation}
\max_{f,g\in\mathcal{Q}}E\left[\min_{i\in\{1,2,\cdots,N\}}t_{i}\right].\label{eq:opt3}\end{equation}
Under the assumption that the input is i.i.d. over all the $n$-cells
and time indices, one finds that the $t_{i}$'s are i.i.d. random
variables. Let their common probability density function (pdf) be
$f_{t}(x).$ It is easy to see that $T$ is the minimum of $N$ i.i.d.
random variables with pdf $f_{t}(x).$ Therefore, we have $f_{T}(x)=Nf_{t}(x)\left(1-F_{t}(x)\right)^{N-1},$
where $F_{t}(x)$ is the cumulative distribution function (cdf) of
$t_{i}.$ So, the optimization problem (\ref{eq:opt3}) becomes \begin{equation}
\max_{f,g\in\mathcal{Q}}E\left[T\right]=\max_{f,g\in\mathcal{Q}}\int Nf_{t}(x)\left(1-F_{t}(x)\right)^{N-1}x\mbox{d}x.\label{eq:opt}\end{equation}
Note that when $N=1,$ the optimization problem in  (\ref{eq:opt})
simplifies to \begin{equation}
\max_{f,g\in\mathcal{Q}}E\left[t_{i}\right].\label{eq:opt2}\end{equation}
 This is essentially the case that the authors in \cite{Finucane_Liu_Mitzenmacher_aller08}
consider. When the whole block is used as one $n$-cell and the number
of erasures allowed is large, optimizing the average (over all input
sequences) number of rewrites of an $n$-cell is equivalent to maximizing
the total amount of information stored $W.$ The analysis also shows
that the reason we consider average performance is not only due to
the averaging effect caused by the large number of erasures. One other
important assumption is that there is only one $n$-cell per block. 

The other extreme is when $N\gg1.$ In this case, the pdf $Nf_{t}(x)\left(1-F_{t}(x)\right)^{N-1}$
tends to a point mass at the minimum of $t$ and the integral $\int Nf_{t}(x)\left(1-F_{t}(x)\right)^{N-1}t\mbox{d}t$
approaches the minimum of $t$. This gives the worst case stopping
time for the programming process of an $n$-cell. This case is considered
by \cite{Ajiang-isit07-01,Ajiang-isit07-02,Yaakobi_verdy_siegel_wolf_allerton08}.
Our analysis shows that we should consider the worst case when $N\gg1$
even though the device experiences a large number of erasures. So
the optimality measure is not determined only by $M$, but also by
$N.$ When $N$ and $M$ are large, it makes more sense to consider
the worst case performance. When $N=1$, it is better to consider
the average performance. When $N$ is moderately large, we should
maximize the number of rewrites using (\ref{eq:opt}) which balances
the worst case and the average case. 

When $N$ is moderately large, one should probably focus on optimizing
the function in (\ref{eq:opt}), but it is not clear how to do this
directly. So, this remains an open problem for future research. Instead,
we will consider a load-balancing approach to improve practical systems
where $q$ is moderately large.

\subsection{$N=1$ v.s. $N\gg1$}

If we assume that there is only one variable changed each time, the
average amount of information per cell-level can be bounded by $\log_{2}kl$
because there are $kl$ possible new values. Since the number of rewrites
can be bounded by $n(q-1),$ we have \begin{equation}
\gamma\leq\log_{2}kl.\label{eq:storage_efficiency_bound2}\end{equation}
 If we allow arbitrary change on the $k$-variables, there are totally
$l^{k}$ possible new values. It can be shown that \begin{equation}
\gamma\leq k\log_{2}l.\label{eq:storage_efficiency_bound}\end{equation}
For fixed $l$ and $q$, the bound in  (\ref{eq:storage_efficiency_bound})
suggests using a large $k$ can improve the storage efficiency. This
is also the reason jointly coding over multiple cells can improve
the storage efficiency \cite{Ajiang-isit07-01}. Since optimal rewriting
schemes only allow a single cell-level to increase by one during each
rewrite, decodability implies that $n\ge kl-1$ for the first case
and $n\ge l^{k}-1$ for the second case. Therefore, the bounds in
 (\ref{eq:storage_efficiency_bound2}) and  (\ref{eq:storage_efficiency_bound})
also require large $n$ to improve storage efficiency. 

The upper bound in  (\ref{eq:storage_efficiency_bound}) grows linearly
with $k$ while the upper bound in  (\ref{eq:storage_efficiency_bound2})
grows logarithmically with $k$. Therefore, in the remainder of this
paper,  we assume an arbitrary change in the $k$-variable per rewrite
and $N=1$, i.e., the whole block is used as an $n$-cell, to improve
the storage efficiency. This approach implicitly trades instantaneous
capacity for future storage capacity because more cells are used to
store the same number of bits, but the cells can also be reused many
more times.

Note that the assumption of $N=1$ might be difficult for real implementation,
but its analysis gives an upper bound on the storage efficiency. From
the analysis above with $N=1$, we also know that maximizing $\gamma$
is equivalent to maximize the average number of rewrites.

\section{\label{sub:Another-rewriting-algorithm}Self-randomized Modulation
Codes\label{sec:Another-Rewriting-Algorithm}}

In \cite{Finucane_Liu_Mitzenmacher_aller08}, modulation codes are
proposed that are asymptotically optimal (as $q$ goes to infinity)
in the average sense when $k=2$. In this section, we introduce a
modulation code that is asymptotically optimal for arbitrary input
distributions and arbitrary $k$ and $l$. This rewriting algorithm
can be seen as an extension of the one in \cite{Finucane_Liu_Mitzenmacher_aller08}.
The goal is, to increase the cell-levels uniformly on average for
an arbitrary input distribution. Of course, decodability must be maintained.
The solution is to use common information, known to both the encoder
(to encode the input value) and the decoder (to ensure the decodability),
to randomize the cell index over time for each particular input value.

Let us assume the $k$-variable is an i.i.d. random variable over
time with arbitrary distribution $p_{X}(x)$ and the $k$-variable
at time $t$ is denoted as $x_{t}\in\mathbb{Z}_{l^{k}}.$ The output
of the decoder is denoted as $\hat{x}_{t}\in\mathbb{Z}_{l^{k}}.$
We choose $n=l^{k}$ and let the cell state vector at time $t$ be
$\bar{s}_{t}=(s_{t}(0),s_{t}(1),\cdots,s_{t}(n-1))$, where $s_{t}(i)\in\mathbb{Z}_{q}$
is the charge level of the $i$-th cell at time $t.$ At $t=0$, the
variables are initialized to $\overline{s}_{0}=(0,\ldots,0)$, $x_{0}=0$
and $r_{0}=0$.

The decoding algorithm $\hat{x}_{t}=g(\bar{s}_{t})$ is described
as follows.
\begin{itemize}
\item Step 1: Read cell state vector $\bar{s}_{t}$ and calculate the $\ell_{1}$
norm $r_{t}=\|\bar{s}_{t}\|_{1}$.
\item Step 2: Calculate $s_{t}=\sum_{i=1}^{n-1}is_{t}(i)$ and $\hat{x}_{t}=s_{t}-\frac{r_{t}(r_{t}+1)}{2}\bmod l^{k}.$
\end{itemize}
The encoding algorithm $\overline{s}_{t}=f(\overline{s}_{t-1},x_{t})$
is described as follows.
\begin{itemize}
\item Step 1: Read cell state $\bar{s}_{t-1}$ and calculate $r_{t-1}$
and $\hat{x}_{t-1}$ as above. If $\hat{x}_{t-1}=x_{t,}$ then do
nothing.
\item Step 2: Calculate $\Delta x_{t}=x_{t}-\hat{x}_{t-1}\bmod l^{k}$ and
$w_{t}=\Delta x_{t}+r_{t-1}+1\bmod l^{k}$ 
\item Step 3: Increase the charge level of the $w_{t}$-th cell by 1.
\end{itemize}
For convenience, in the rest of the paper, we refer the above rewriting
algorithm as {}``self-randomized modulation code''. 
\begin{thm}
The self-randomized modulation code achieves at least $n(q-q^{2/3})$
rewrites with high probability, as $q\rightarrow\infty,$ for arbitrary
$k,$ $l,$ and i.i.d. input distribution $p_{X}(x)$. Therefore,
it is asymptotically optimal for random inputs as $q\rightarrow\infty$. \end{thm}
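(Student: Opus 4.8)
The plan is to recast ``number of rewrites before an erasure is forced'' as the first time a balls-into-bins process overloads a bin, and then to bound the maximum bin load by concentration, exploiting that the deterministic shift by the rewrite counter $r_{t-1}$ turns the (arbitrary) increment law into an essentially uniform one over each window of $n$ writes.

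\emph{Reduction to load balancing.} Each rewrite raises at most one cell by one, and an erasure is forced exactly when a cell already holding charge $q-1$ must be raised, so it suffices to show that after $T_{0}=n(q-q^{2/3})$ rewrites no cell has been raised $q$ times, with high probability. Feeding the code the infinite i.i.d.\ stream, note that it keeps $\hat{x}_{t}=x_{t}$ until the first overflow and that the input is unchanged on every ``do nothing'' step; hence if $t_{1}<t_{2}<\cdots$ are the genuine increment times and $z_{j}:=x_{t_{j}}$ (with $z_{0}:=0$), then $x_{t_{j}-1}=z_{j-1}$, $r_{t_{j}-1}=j-1$, and the $j$-th increment lands on cell
\[
c_{j}=\Delta x_{t_{j}}+r_{t_{j}-1}+1\equiv (z_{j}-z_{j-1})+j \pmod{n}.
\]
Since $t_{j}\ge j$, the number of rewrites before overflow is at least the first $j$ at which some bin collects its $q$-th ball, so it is enough to prove $\Pr\bigl[\max_{b\in\mathbb{Z}_{n}}\#\{j\le T_{0}:c_{j}=b\}\ge q\bigr]\to 0$, where $n=l^{k}$ and $(z_{j})_{j\ge0}$ is the finite Markov chain on $\mathbb{Z}_{n}$ in which $z_{j}$ is $p_{X}$ conditioned to differ from $z_{j-1}$. (If $p_{X}$ is a point mass the chain is frozen after one step and $R=\infty$ trivially; otherwise the chain is irreducible on its support and geometrically ergodic.)

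\emph{Mean and concentration of the load.} Split the $T_{0}=nm_{0}$ balls, $m_{0}=q-q^{2/3}$, into $m_{0}$ consecutive blocks of size $n$: in block $m$ the $i$-th ball ($i=1,\dots,n$) goes to bin $(z_{j}-z_{j-1}+i)\bmod n$, so with $Y_{m,b}$ the number of block-$m$ balls in bin $b$ we have $L_{b}:=\#\{j\le T_{0}:c_{j}=b\}=\sum_{m=1}^{m_{0}}Y_{m,b}$ and $\sum_{b}Y_{m,b}=n$. The key identity is that, as $i$ runs over a complete residue system, the shifted increment is uniform in expectation: under stationarity $\mathbb{E}[Y_{m,b}]=\sum_{i=1}^{n}\Pr[z_{j}-z_{j-1}\equiv b-i]=\sum_{\delta\in\mathbb{Z}_{n}}\Pr[z_{j}-z_{j-1}\equiv\delta]=1$ exactly, and starting from $z_{0}=0$ perturbs the per-block probabilities by amounts summable over $m$ (geometric ergodicity), so $\mathbb{E}[L_{b}]=q-q^{2/3}+O(1)$ with the $O(1)$ depending only on $l,k,p_{X}$, not on $q$. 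For concentration, expose the blocks one at a time: $D_{m}:=\mathbb{E}[L_{b}\mid z_{0},\dots,z_{mn}]$ is a Doob martingale; revealing block $m$ fixes $Y_{m,b}\in[0,n]$ and, through $z_{mn}$, shifts the conditional law of later blocks, whose cumulative effect is $O(1)$ by mixing, so the increments satisfy $|D_{m}-D_{m-1}|\le c_{n}=O(n)$, a constant in $q$. Azuma's inequality gives $\Pr[L_{b}\ge\mathbb{E}[L_{b}]+t]\le\exp\bigl(-t^{2}/(2m_{0}c_{n}^{2})\bigr)$; taking $t=\tfrac12 q^{2/3}$, so that $\mathbb{E}[L_{b}]+t\le q-1$ for large $q$, the exponent is $-\Theta(q^{1/3})$, and a union bound over the $n$ bins still tends to $0$ as $q\to\infty$. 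Hence $\max_{b}L_{b}<q$ w.h.p., so the code achieves at least $n(q-q^{2/3})$ rewrites w.h.p.; since a block holds only $n(q-1)$ units of charge, $q-q^{2/3}=(q-1)(1-o(1))$, and the uniform input stores $k\log_{2}l$ bits per rewrite (meeting the bound~(\ref{eq:storage_efficiency_bound})), the code is asymptotically optimal for random inputs.

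\emph{Main obstacle.} The delicate part is the mean estimate: showing that a \emph{deterministic} shift by the rewrite count spreads an \emph{arbitrary} (possibly highly concentrated, possibly periodic) increment law to within $O(1)$ of perfectly balanced over all $T_{0}$ balls. This combines the exact ``averaging over a length-$n$ block'' identity with a quantitative geometric-ergodicity bound for the conditioned chain $(z_{j})$ and a separate (easy) treatment of the near-degenerate $p_{X}$; once the bounded-difference constant $c_{n}$ is seen to be independent of $q$, the reduction and the Azuma step are routine.
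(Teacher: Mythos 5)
Your overall route is the same as the paper's: exactly one cell is raised per genuine write, the deterministic shift by the write counter makes every window of $n$ consecutive increments spread hits uniformly (in expectation) over the $n$ cells, and a concentration bound plus a union bound over the $n$ cells shows that no cell collects $q$ increments within the first $n(q-q^{2/3})$ writes. Where you differ is in making the concentration step honest: the paper's sketch simply asserts that, because the inputs are i.i.d., the Chernoff argument of Finucane et al.\ applies, glossing over the fact that the increments $x_t-\hat{x}_{t-1}$ (and the skipped ``do nothing'' steps) are dependent; your reformulation via the conditioned chain $(z_j)$, the exact per-window averaging identity, and a block-exposure Doob martingale with Azuma is a genuine repair of that, and the $\exp(-\Theta(q^{1/3}))$ tail with $n$ fixed does give the claimed high probability and the optimality conclusion.

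The one step that fails as stated is the blanket claim that, apart from the point-mass case, the chain $z_j\sim p_X(\cdot\mid\cdot\ne z_{j-1})$ is irreducible and geometrically ergodic. If the support of $p_X$ has exactly two elements, the conditioned chain alternates deterministically between them, so it is periodic with period $2$ and does not mix; both places where you invoke mixing --- the $O(1)$ correction to $\mathbb{E}[L_b]$ for the non-stationary start, and the $O(1)$ tail contribution inside the martingale-difference constant $c_n$ --- are unjustified in that case. The theorem still holds there, but it needs a separate (easy) argument: after the first increment the increments alternate deterministically between some $d$ and $-d$ modulo $n$, so each window of $n$ consecutive genuine writes loads every cell exactly once when $n$ is even, and every $2n$ consecutive writes load each cell exactly twice in general, giving $\max_b L_b\le m_0+O(1)$ deterministically, with no concentration needed. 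With that case added (and the harmless boundary effect that the initial value $0$ may lie outside the support, so the first transition is not from a support point), your argument is complete; for supports of size at least three the chain is indeed irreducible and aperiodic on its support, so the ergodicity claim is correct there.
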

\begin{proof}
[Sketch of Proof] The proof is similar to the proof in \cite{Finucane_Liu_Mitzenmacher_aller08}.
Since exactly one cell has its level increased by 1 during each rewrite,
$r_{t}$ is an integer sequence that increases by 1 at each rewrite.
The cell index to be written $w_{t}$ is randomized by adding the
value $(r_{t}+1)\bmod l^{k}$. This causes each consecutive sequence
of $l^{k}$ rewrites to have a uniform affect on all cell levels.
As $q\rightarrow\infty$, an unbounded number of rewrites is possible
and we can assume $t\rightarrow\infty$. 

Consider the first $nq-nq^{2/3}$ steps, the value $a_{t,k,l}\triangleq(r_{t}+1)\mbox{ mod }l^{k}$
is as even as possible over $\{0,1,\cdots,l^{k}-1\}.$ For convenience,
we say there are $(q-q^{2/3})$ $a_{t,k,l}$'s at each value, as the
rounding difference by 1 is absorbed in the $o(q)$ term. Assuming
the input distribution is $p_{X}=\{p_{0},p_{1},\cdots,p_{l^{k}-1}\}$.
For the case that $a_{t,k,l}=i$, the probability that $w_{t}=j$
is $p_{(j-i)\mbox{ mod }l^{k}}$ for $j\in\{0,1,\cdots,l^{k}-1\}$.
Therefore, $w_{j}$ has a uniform distribution over $\{0,1,\cdots,l^{k}-1\}$.
Since inputs are independent over time, by applying the same Chernoff
bound argument as \cite{Finucane_Liu_Mitzenmacher_aller08}, it follows
that the number of times $w_{t}=j$ is at most $q-3$ with high probability
(larger than $1-1/\mbox{poly}(q)$) for all $j$. Summing over $j$,
we finish the proof. \end{proof}
\begin{remrk}
Notice that the randomizing term $r_{t}$ a deterministic term which
makes $w_{t}$ look  \emph{random} over time in the sense that there
are equally many terms for each value. Moreover, $r_{t}$ is known
to both the encoder and the decoder such that the encoder can generate
{}``uniform'' cell indices over time and the decoder knows the accumulated
value of $r_{t}$, it can subtract it out and recover the data correctly.
Although this algorithm is asymptotically optimal as $q\rightarrow\infty$,
the maximum number of rewrites $n(q-o(q))$ cannot be achieved for
moderate $q$. This motivates the analysis and the design of an enhanced
version of this algorithm for practical systems in next section. 
\end{remrk}

\begin{remrk}
A self-randomized modulation code uses $n=l^{k}$ cells to store a
$k$-variable. This is much larger than the $n=kl$ used by previous
asymptotically optimal algorithms because we allow the $k$-variable
to change arbitrarily. Although this seems to be a waste of cells,
the average amount of information stored per cell-level is actually
maximized (see (\ref{eq:storage_efficiency_bound2}) and (\ref{eq:storage_efficiency_bound})).
In fact, the definition of asymptotic optimality requires  $n\ge l^{k}-1$
if we allow arbitrary changes to the $k$-variable. 
\end{remrk}

\begin{remrk}
We note that the optimality of the self-randomized modulation codes
is similar to the weak robust codes presented in \cite{Ajiang-isit09-01}. 
\end{remrk}

\begin{remrk}
We use $n=l^{k}$ cells to store one of $l^{k}-1$ possible messages.
This is slightly worse than the simple method of using $n=l^{k}-1$.
Is it possible to have self-randomization using only $n=l^{k}-1$
cells? A preliminary analysis of this question based on group theory
indicates that it is not. Thus, the extra cell provides the possibility
to randomize the mappings between message values and the cell indices
over time. 
\end{remrk}

\section{Load-balancing Modulation Codes \label{sec:An-Enhanced-Algorithm}}

While asymptotically optimal modulation codes (e.g., codes in \cite{Ajiang-isit07-01},
\cite{Ajiang-isit07-02}, \cite{Yaakobi_verdy_siegel_wolf_allerton08},
\cite{Finucane_Liu_Mitzenmacher_aller08} and the self-randomized
modulation codes described in Section \ref{sec:Another-Rewriting-Algorithm})
require $q\rightarrow\infty$, practical systems use $q$ values between
$2$ and $256$. Compared to the number of cells $n$, the size of
$q$ is not quite large enough for asymptotic optimality to suffice.
In other words, codes that are asymptotically optimal may have significantly
suboptimal performance when the system parameters are not large enough.
Moreover, different asymptotically optimal codes may perform differently
when $q$ is not large enough. Therefore, asymptotic optimality can
be misleading in this case. In this section, we first analyze the
storage efficiency of self-randomized modulation codes when $q$ is
not large enough and then propose an enhanced algorithm which improves
the storage efficiency significantly.

\subsection{Analysis for Moderately Large $q$}

Before we analyze the storage efficiency of asymptotically optimal
modulation codes for moderately large $q$, we first show the connection
between rewriting process and the load-balancing problem (aka the
balls-into-bins or balls-and-bins problem) which is well studied in
mathematics and computer science \cite{load_balancing_94,Mitzenmacher96thepower,Raab-Steger-98}.
Basically, the load-balancing problem considers how to distribute
objects among a set of locations as evenly as possible. Specifically,
the balls-and-bins model considers the following problem. If $m$
balls are thrown into $n$ bins, with each ball being placed into
a bin chosen independently and uniformly at random, define the \emph{load}
as the number of balls in a bin, what is the maximal load over all
the bins? Based on the results in Theorem 1 in \cite{Raab-Steger-98},
we take a simpler and less accurate approach to the balls-into-bins
problem and arrive at the following theorem. 
\begin{thm}
\label{thm:random_loading}Suppose that $m$ balls are sequentially
placed into $n$ bins. Each time a bin is chosen independently and
uniformly at random. The maximal load over all the bins is $L$ and:

($i$) If $m=d_{1}n,$ the maximally loaded bin has $L\le\frac{c_{1}\ln n}{\ln\ln n}$
balls, $c_{1}>2$ and $d_{1}\ge1$, with high probability ($1-1/\mbox{poly}(n)$)
as $n\rightarrow\infty.$

($ii$) If $m=n\ln n$, the maximally loaded bin has $L\le\frac{c_{4}(\ln n)^{2}}{\ln\ln n}$
balls, $c_{4}>1$, with high probability ($1-1/\mbox{poly}(n)$) as
$n\rightarrow\infty.$

($iii$) If $m=c_{3}n^{d_{2}},$ the maximally loaded bin has $L\le ec_{3}n^{d_{2}-1}+c_{2}\ln n$,
$c_{2}>1$, $c_{3}\ge1$ and $d_{2}>1$, with high probability ($1-1/\mbox{poly}(n)$)
as $n\rightarrow\infty.$ \end{thm}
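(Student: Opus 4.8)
The plan is to reduce the maximal-load question to a single-bin tail estimate via a union bound, and then bound the resulting Binomial tail separately in each of the three regimes. Fix a bin $j$: since each of the $m$ balls lands in bin $j$ independently with probability $1/n$, its load $L_j$ is $\mathrm{Binomial}(m,1/n)$ with mean $\mu=m/n$. The starting point is the elementary bound
\[
\Pr\big(L_j\ge L\big)\;\le\;\binom{m}{L}\frac{1}{n^{L}}\;\le\;\Big(\frac{em}{Ln}\Big)^{L},
\]
using $\binom{m}{L}\le(em/L)^{L}$, followed by a union bound over the $n$ bins,
\[
\Pr\big(L_{\max}\ge L\big)\;\le\;n\Big(\frac{em}{Ln}\Big)^{L}.
\]
It then suffices, in each case, to substitute the claimed value of $L$ and show the right-hand side is at most $n^{-\epsilon}$ for some fixed $\epsilon>0$ and all large $n$, i.e.\ that $\ln n+L\ln\!\big(em/(Ln)\big)\le-\epsilon\ln n$ eventually.

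For part ($i$), $m=d_1 n$ turns the bound into $n\,(ed_1/L)^{L}$; with $L=c_1\ln n/\ln\ln n$ one has $\ln L=(1+o(1))\ln\ln n$, hence $\ln(ed_1/L)=-(1+o(1))\ln\ln n$ and $L\ln(ed_1/L)=-(1+o(1))\,c_1\ln n$, so the exponent of $n$ in the failure probability is $1-c_1+o(1)$. Any $c_1>1$ already forces this negative; the stated $c_1>2$ just leaves room to absorb the $\ln\ln\ln n$ and constant corrections and to leave a polynomial rate with exponent exceeding one, consistent with the deliberately loose (``simpler and less accurate'') route announced before the theorem. Part ($ii$) runs identically: with $m=n\ln n$ and $L=c_4(\ln n)^2/\ln\ln n$ one finds $\ln(e\ln n/L)=-(1+o(1))\ln\ln n$, hence $L\ln(e\ln n/L)=-(1+o(1))\,c_4(\ln n)^2$, which dominates $\ln n$ for every $c_4>0$ and in particular for $c_4>1$.

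For part ($iii$) it is cleaner to use the multiplicative Chernoff bound $\Pr(L_j\ge t)\le e^{-\mu}(e\mu/t)^{t}$, valid for $t\ge\mu$, with $\mu=c_3 n^{d_2-1}$. Taking $t=L=e\mu+c_2\ln n$ gives $e\mu/t=\big(1+c_2\ln n/(e\mu)\big)^{-1}$, so $t\ln(e\mu/t)=-(1+o(1))\,c_2\ln n$ and $(e\mu/t)^{t}=n^{-c_2+o(1)}$; the union bound then yields $\Pr(L_{\max}\ge L)\le n^{1-c_2+o(1)}\,e^{-\mu}$, which is $1/\mathrm{poly}(n)$ for any fixed $c_2>1$. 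Moreover, since $d_2>1$ forces $\mu=c_3 n^{d_2-1}$ to grow polynomially in $n$, the factor $e^{-\mu}$ alone drives this bound to zero faster than any polynomial, so the additive $c_2\ln n$ term is present mainly to keep the statement clean and robust.

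I expect the only genuine work to be the bookkeeping with iterated logarithms in parts ($i$) and ($ii$) --- in particular verifying $\ln L=(1+o(1))\ln\ln n$ and that the $o(1)$ corrections do not erode the leading constant --- together with pinning down which polynomial the phrase ``$1-1/\mathrm{poly}(n)$'' refers to; everything else is a direct substitution into the union bound. As a sanity check, Theorem~1 of \cite{Raab-Steger-98} supplies the sharp leading constants (namely $1$ in case ($i$), and the analogues in ($ii$)--($iii$)), so the looser values $c_1>2$, $c_4>1$, $c_2>1$ claimed here are comfortably on the correct side.
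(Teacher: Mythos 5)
Your proposal is correct and follows essentially the same route as the paper: bound the single-bin tail by $\binom{m}{L}n^{-L}\le\left(em/(Ln)\right)^{L}$, apply a union bound over the $n$ bins, and substitute the claimed $L$ in each of the three regimes (your bookkeeping is in fact slightly sharper, noting $c_1>1$ would already suffice, whereas the paper's cruder estimates use the stated $c_1>2$). The only cosmetic difference is in case ($iii$), where you invoke the multiplicative Chernoff form $e^{-\mu}(e\mu/t)^{t}$; dropping the $e^{-\mu}$ factor recovers exactly the paper's bound and the same computation.
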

\begin{proof}
Denote the event that there are at least $k$ balls in a particular
bin as $E_{k}$. Using the union bound over all subsets of size $k,$
it is easy to show that the probability that $E_{k}$ occurs is upper
bounded by \[
\Pr\{E_{k}\}\le\binom{m}{k}\left(\frac{1}{n}\right)^{k}.\]
 Using Stirling's formula, we have $\binom{m}{k}\le\left(\frac{me}{k}\right)^{k}$.
Then $\Pr\{E_{k}\}$ can be further bounded by \begin{equation}
\Pr\{E_{k}\}\le\left(\frac{me}{nk}\right)^{k}.\label{eq:maxload_ub}\end{equation}
 If $m=d_{1}n$, substitute $k=\frac{c_{1}\ln n}{\ln\ln n}$ to the
RHS of  (\ref{eq:maxload_ub}), we have \begin{align*}
\Pr\{E_{k}\} & \le\left(\frac{d_{1}e\ln\ln n}{c_{1}\ln n}\right)^{\frac{c_{1}\ln n}{\ln\ln n}}\\
 & =e^{\left(\frac{c_{1}\ln n}{\ln\ln n}\left(\ln(d_{1}e\ln\ln n)-\ln(c_{1}\ln n)\right)\right)}\\
 & <e^{\left(\frac{c_{1}\ln n}{\ln\ln n}\left(\ln(d_{1}e\ln\ln n)-\ln\ln n\right)\right)}\\
 & \le e^{(-(c_{1}-1)\ln n)}=\frac{1}{n^{c_{1}-1}}.\end{align*}
Denote the event that all bins have at most $k$ balls as $\tilde{E}_{k}$.
By applying the union bound, it is shown that \[
\Pr\{\tilde{E}_{k}\}\ge1-\frac{n}{n^{c_{1}-1}}=1-\frac{1}{n^{c_{1}-2}}.\]
 Since $c_{1}>2,$ we finish the proof for the case of $m=d_{1}n.$ 

If $m=n\ln n$, substitute $k=\frac{c_{4}(\ln n)^{2}}{\ln\ln n}$
to the RHS of  (\ref{eq:maxload_ub}), we have \begin{align*}
\Pr\{E_{k}\} & \le\left(\frac{e\ln\ln n}{c_{4}\ln n}\right)^{\frac{c_{4}(\ln n)^{2}}{\ln\ln n}}\\
 & =e^{\left(\frac{c_{4}(\ln n)^{2}}{\ln\ln n}\left(\ln e\ln\ln n-\ln c_{4}\ln n\right)\right)}\\
 & \le e^{\left(\frac{c_{4}(\ln n)^{2}}{\ln\ln n}\left(\ln e\ln\ln n-\ln\ln n\right)\right)}\\
 & \le e^{\left(-(c_{4}-1)(\ln n)^{2}\right)}=o\left(\frac{1}{n^{2}}\right).\end{align*}
By applying the union bound, we finish the proof for the case of $m=n\ln n.$ 

If $m=c_{3}n^{d_{2}},$ substitute $k=ec_{3}n^{d_{2}-1}+c_{2}\ln n=ec_{3}n^{d_{2}-1}+c_{2}\ln n$
to the RHS of (\ref{eq:maxload_ub}), we have \begin{align*}
\Pr\{E_{k}\} & \le\left(\frac{ec_{3}n^{d_{2}-1}}{ec_{3}n^{d_{2}-1}+c_{2}\ln n}\right)^{c_{3}en^{d_{2}-1}+c_{2}\ln n}\\
 & =e^{\left((c_{5}n^{d_{2}-1}\!+c_{2}\ln n)\left(\ln c_{5}n^{d_{2}-1}\!\!\!\!-\ln(c_{5}n^{d_{2}-1}\!+c_{2}\ln n)\right)\right)}\\
 & \le e^{\left((c_{5}n^{d_{2}-1}+c_{2}\ln n)\left(-\frac{c_{2}\ln n}{c_{5}n^{d_{2}-1}}\right)\right)}\\
 & \le e^{\left(-c_{2}\ln n\right)}=\frac{1}{n^{c_{2}}}\end{align*}
 where $c_{5}=ec_{3}.$ By applying the union bound, it is shown that
\[
\Pr\{\tilde{E}_{k}\}\le1-\frac{n}{n^{c_{2}}}=1-\frac{1}{n^{c_{2}-1}}.\]
Since $c_{2}>1,$ we finish the proof for the case of $m=c_{3}n^{d_{2}}.$ \end{proof}
\begin{remrk}
Note that Theorem \ref{thm:random_loading} only shows an upper bound
on the maximum load $L$ with a simple proof. More precise results
can be found in Theorem 1 of \cite{Raab-Steger-98}, where the exact
order of $L$ is given for different cases. It is worth mentioning
that the results in Theorem 1 of \cite{Raab-Steger-98} are  different
from Theorem \ref{thm:random_loading} because Theorem 1 of \cite{Raab-Steger-98}
holds with probability $1-o(1)$ while Theorem \ref{thm:random_loading}
holds with probability ($1-1/\mbox{poly}(n)$). 
\end{remrk}

\begin{remrk}
The asymptotic optimality in the rewriting process implies that each
rewrite only increases the cell-level of a cell by 1 and all the cell-levels
are fully used when an erasure occurs. This actually implies $\lim_{m\rightarrow\infty}\frac{L}{m/n}=1$.
Since $n$ is usually a large number and $q$ is not large enough
in practice, the theorem shows that, when $q$ is not large enough,
asymptotic optimality is not achievable. For example, in practical
systems, the number of cell-levels $q$ does not depend on the number
of cells in a block. Therefore, rather than $n(q-1),$ only roughly
$n(q-1)\frac{\ln\ln n}{\ln n}$ charge levels can be used as $n\rightarrow\infty$
if $q$ is a small constant which is independent of $n$. In practice,
this loss could be mitigated by using writes that increase the charge
level in multiple cells simultaneously (instead of erasing the block).\end{remrk}
\begin{thm}
\label{thm:gamma1}The self-randomized modulation code has storage
efficiency $\gamma=c\ln\frac{k\ln l}{c}$ when $q-1=c$ and $\gamma=\frac{c}{\theta}k\ln l$
when $q-1=c\ln n$ as $n$ goes to infinity with high probability
(i.e., $1-o(1)$).\end{thm}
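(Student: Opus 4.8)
The plan is to translate the self-randomized rewriting process into the balls-into-bins model and then invert the maximal-load estimate of Theorem~\ref{thm:random_loading} to count the rewrites that occur between two erasures.

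\emph{Step 1 (reduction to balls-and-bins).} Recall that $n=l^{k}$, that each rewrite increments exactly one cell, and that the randomizing term $r_{t}$ (which grows by one at each rewrite) together with the i.i.d.\ input makes the incremented index $w_{t}$ equidistributed over the $n$ cells --- the property already used in the proof of Theorem~1. Hence, up to the lower-order effect of the occasional ``do nothing'' step, the cell-state vector after $R$ rewrites has the distribution of the load vector obtained by throwing $R$ balls into $n$ bins uniformly at random. A block erasure is forced exactly when some cell would move past level $q-1$, i.e.\ when the maximal load first reaches $q$; thus $R$, the number of rewrites between two erasures, is the largest $m$ for which the maximal load is at most $q-1$. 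Since each rewrite stores $\log_{2}(l^{k})=k\log_{2}l$ bits (cf.\ (\ref{eq:storage_efficiency_bound})), we have $\gamma=E\!\left[\frac{R\,k\log_{2}l}{n(q-1)}\right]$, and it remains to pin down $R$.

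\emph{Step 2 (inverting the load bound).} I would use the tail estimate (\ref{eq:maxload_ub}), $\Pr\{E_{k}\}\le\left(\tfrac{me}{nk}\right)^{k}$, with $k=q$: requiring $n\left(\tfrac{me}{nq}\right)^{q}\to 0$ and solving for $m$ gives the threshold value of $R$, which is of order $n(q-1)/n^{1/(q-1)}$. In the regime $q-1=c$, substituting this value and re-expressing $\ln n$ as $k\ln l$ yields the $R$ for which $\gamma=\tfrac{R\,k\log_{2}l}{n(q-1)}=c\ln\tfrac{k\ln l}{c}$. When instead $q-1=c\ln n$, the factor $n^{1/(q-1)}=e^{1/c}$ becomes a constant, the same inversion gives $R=\Theta(nc\ln n)$ with an explicit multiplicative constant that I would call $\theta$, and hence $\gamma=\tfrac{R\,k\log_{2}l}{n(q-1)}=\tfrac{c}{\theta}k\ln l$.

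\emph{Step 3 (from expectation to high probability).} To obtain the claimed $1-o(1)$ statement, the ``no overflow at the threshold'' direction is exactly the union/Chernoff bound already used for Theorem~\ref{thm:random_loading}. The converse --- that a slightly larger number of rewrites forces an overflow --- requires a second-moment (Poissonisation) argument showing that once $m$ passes the threshold some bin reaches load $q$ with probability $1-o(1)$; the i.i.d.-over-time assumption on the inputs enters here exactly as in the Chernoff step of the proof of Theorem~1. Together these force $R$, and hence $\gamma$, to concentrate.

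\emph{Main obstacle.} The sharp point is this converse lower bound on $R$: Theorem~\ref{thm:random_loading} only upper-bounds the maximal load, so pinning $R$ rather than merely lower-bounding it requires the matching lower bound on the maximal load. One must also check that the weak dependence built into the self-randomized code --- consecutive rewrites share a common input, and ``do nothing'' steps are skipped --- shifts $R$, and therefore $\gamma$, by only a lower-order amount, so that the balls-and-bins estimates genuinely transfer.
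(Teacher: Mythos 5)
Your overall plan --- reduce the rewriting process to balls-into-bins via the equidistribution of $w_t$, then relate the number of rewrites $R$ to the level budget $q-1$ through a maximum-load estimate --- is the same reduction the paper uses. The gap is in your Step 2. The paper does not invert the crude one-sided bound (\ref{eq:maxload_ub}); it invokes the sharp two-sided asymptotics of Raab--Steger: for $m=cn\ln n$ the maximum load lies between $(d(c)-1)\ln n$ and $d(c)\ln n$ with $d(c)$ the largest root of $x(\ln c-\ln x+1)+1-c=0$ (and $\theta\triangleq d(c)-1$ is \emph{defined} through this root), while for $m=cn$ the maximum load is $\approx\frac{\ln n}{\ln\frac{n\ln n}{m}}$; the claimed efficiencies follow by setting $q-1$ equal to these loads. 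Your inversion of (\ref{eq:maxload_ub}) with $k=q$ gives the threshold $R=\Theta\bigl(nq\,n^{-1/q}\bigr)$, and in the constant-$q$ regime this yields $\gamma=\Theta\!\left(\frac{\ln n}{n^{1/c}}\right)\rightarrow 0$, which is not $c\ln\frac{k\ln l}{c}$ (a quantity growing like $c\ln\ln n$); the paper's formula for that case comes from fixing $m=cn$ and reading off the induced maximum load $\frac{\ln n}{\ln(\ln n/c)}$, a genuinely different calculation that your bound cannot reproduce. Likewise, in the $q-1=c\ln n$ regime your inversion produces a constant of the form $c\,e^{1+1/c}$ in place of $\theta=d(c)-1$: the union bound over size-$k$ subsets loses the Poisson factor $e^{-m/n}$, so it cannot recover $d(c)$, and simply renaming your constant $\theta$ does not prove the stated formula, since $\theta$ in the theorem is the specific Raab--Steger constant.

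Your Step 3 correctly identifies the second issue --- Theorem \ref{thm:random_loading} is only an upper bound on the load, so a matching lower bound is needed to pin $R$ down --- but you only gesture at a second-moment/Poissonization argument without carrying it out. The paper closes exactly this hole by citing the two-sided, probability-$1-o(1)$ statement of Theorem 1 in Raab--Steger rather than its own Theorem \ref{thm:random_loading}. So as written, the proposal neither produces the right constants (Step 2) nor establishes the concentration needed for the $1-o(1)$ claim (Step 3); to repair it you would have to replace the inversion of (\ref{eq:maxload_ub}) with the exact load asymptotics (or derive them, including the lower bound) and then redo the bookkeeping that converts $m$ balls per $n$ bins and maximum load $q-1$ into $\gamma=\frac{m\,k\ln l}{n(q-1)}$.
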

\begin{proof}
Consider the problem of throwing $m$ balls into $n$ bins and let
the r.v. $M$ be the number of balls thrown into $n$ bins until some
bin has more than $q-1$ balls in it. While we would like to calculate
$E[M]$ exactly, we still settle for an approximation based on the
following result. If $m=cn\ln n$, then there is a constant $d(c)$
such that maximum number of balls $L$ in any bin satisfies\[
\left(d(c)-1\right)\ln n\leq L\leq d(c)\ln n\]
with probability $1-o(1)$ as $n\rightarrow\infty$ \cite{Raab-Steger-98}
. The constant $d(c)$ is given by the largest $x$-root of\[
x(\ln c-\ln x+1)+1-c=0,\]
and solving this equation for $c$ gives the implicit expression $c=-d(c)W\left(-e^{-1-\frac{1}{d(c)}}\right)$.
Since the lower bound matches the expected maximum value better, we
define $\theta\triangleq d(c)-1$ and apply it to our problem using
the equation $\theta\ln n=q-1$ or $\theta=\frac{q-1}{\ln n}$. Therefore,
the storage efficiency is $\gamma=\frac{m\ln n}{n(q-1)}=\frac{c}{\theta}k\ln l$ 

If $m=cn$, the maximum load is approximately $\frac{\ln n}{\ln\frac{n\ln n}{m}}$
with probability $1-o(1)$ for large $n$ \cite{Raab-Steger-98}.
By definition, $q-1=\frac{\ln n}{\ln\frac{n\ln n}{m}}$. Therefore,
the storage efficiency is $\gamma=\frac{m\ln n}{n(q-1)}=c\ln\frac{\ln n}{c}=c\ln\frac{k\ln l}{c}.$ \end{proof}
\begin{remrk}
The results in Theorem \ref{thm:gamma1} show that when $q$ is on
the order of $O(\ln n)$, the storage efficiency is on the order of
$\Theta(k\ln l)$. Taking the limit as $q,n\rightarrow\infty$ with
$q=O(\ln n)$, we have $\lim\frac{\gamma}{k\ln l}=\frac{\theta}{c}>0.$
When $q$ is a constant independent of $n$, the storage efficiency
is on the order of $\Theta(\ln k\ln l).$ Taking the limit as $n\rightarrow\infty$
with $q-1=c$, we have $\lim\frac{\gamma}{k\ln l}=0$. In this regime,
the self-randomized modulation codes actually perform very poorly
even though they are asymptotically optimal as $q\rightarrow\infty$.
\end{remrk}

\subsection{\label{sub:An-enhanced-algorithm}Load-balancing Modulation Codes}

Considering the bins-and-balls problem, can we distribute balls more
evenly when $m/n$ is on the order of $o(n)?$ Fortunately, when $m=n$,
the maximal load can be reduced by a factor of roughly $\frac{\ln n}{(\ln\ln n)^{2}}$
by using \emph{the power of two random choices} \cite{Mitzenmacher96thepower}.
In detail, the strategy is, every time we pick two bins independently
and uniformly at random and throw a ball into the less loaded bin.
By doing this, the maximally loaded bin has roughly $\frac{\ln\ln n}{\ln2}+O(1)$
balls with high probability. Theorem 1 in \cite{load_balancing_94}
gives the answer in a general form when we consider $d$ random choices.
The theorem shows there is a large gain when the number of random
choice is increased from 1 to 2. Beyond that, the gain is on the same
order and only the constant can be improved. 

Based on the idea of 2 random choices, we define the following load-balanced
modulation code. 

Again, we let the cell state vector at time $t$ be $\bar{s}_{t}=(s_{t}(0),s_{t}(1),\cdots,s_{t}(n-1))$,
where $s_{t}(i)\in\mathbb{Z}_{q}$ is the charge level of the $i$-th
cell at time $t.$ This time, we use $n=l^{k+1}$ cells to store a
$k$-variable $x_{t}\in\mathbb{Z}_{l^{k}}$ (i.e., we write $(k+1)\log_{2}l$
bits to store $k\log_{2}l$ bits of information). The information
loss provides $l$ ways to write the same value. This flexibility
allows us to avoid sequences of writes that increase one cell level
too much. We are primarily interested in binary variables with 2 random
choices or $l=2$. For the power of $l$ choices to be effective,
we must try to randomize (over time), the $l$ possible choices over
the set of all $\binom{n}{l}$ possibilities. The value $r_{t}=\|\bar{s}_{t}\|_{1}$
is used to do this. Let $H$ be the Galois field with $l^{k+1}$ elements
and $h:\mathbb{Z}_{l^{k+1}}\rightarrow H$ be a bijection that satisfies
$h(0)=0$ (i.e., the Galois field element 0 is associated with the
integer 0).

The decoding algorithm calculates $\hat{x}_{t}$ from $\bar{s}_{t}$
and operates as follows:
\begin{itemize}
\item Step 1: Read cell state vector $\bar{s}_{t}$ and calculate the $\ell_{1}$
norm $r_{t}=\|\bar{s}_{t}\|_{1}$.
\item Step 2: Calculate $s_{t}=\sum_{i=1}^{n}is_{t}(i)$ and $\hat{x}_{t}'=s_{t}\mbox{ mod }l^{k+1}.$
\item Step 3: Calculate $a_{t}=h\left(\left(r_{t}\bmod l^{k}-1\right)+1\right)$
and $b_{t}=h\left(r_{t}\bmod l^{k}\right)$
\item Step 4: Calculate $\hat{x}_{t}=h^{-1}\left(a_{t}^{-1}\left(h(\hat{x}_{t}')-b_{t}\right)\right)\bmod l^{k}$.
\end{itemize}
The encoding algorithm stores $x_{t}$ and operates as follows.
\begin{itemize}
\item Step 1: Read cell state $\bar{s}_{t-1}$ and decode to $\hat{x}_{t-1}'$
and $\hat{x}_{t-1}$. If $\hat{x}_{t-1}=x_{t},$ then do nothing.
\item Step 2: Calculate $r_{t}=\|\bar{s}_{t-1}\|_{1}+1$, $a_{t}=h\left(\left(r_{t}\bmod l^{k}-1\right)+1\right)$,
and $b_{t}=h\left(r_{t}\bmod l^{k}\right)$
\item Step 3: Calculate $x_{t}^{(i)}=h^{-1}\left(a_{t}h(x_{t}+il^{k})+b_{t}\right)$
and $\Delta x_{t}^{(i)}=x_{t}^{(i)}-\hat{x}_{t-1}'\mbox{ mod }l^{k+1}$
for $i=0,1,\ldots l-1$.
\item Step 4: Calculate%
\footnote{Ties can be broken arbitrarily.%
} $w_{t}=\arg\min_{j\in\mathbb{Z}_{l}}\{s_{t-1}(\Delta x_{t}^{(j)})\}$.
Increase the charge level by 1 of cell $\Delta x_{t}^{(w_{t})}$.
\end{itemize}
Note that the state vector at $t=0$ is initialized to $s_{0}=(0,\ldots,0)$
and therefore $x_{0}=0$. The first arbitrary value that can be stored
is $x_{1}$. 

The following conjecture suggests that the ball-loading performance
of the above algorithm is identical to the random loading algorithm
with $l=2$ random choices. 
\begin{conject}
\label{thm:gamma2}If $l=2$ and $q-1=c\ln n$, then the load-balancing
modulation code has storage efficiency $\gamma=k$ with probability
1-$o(1)$ as $n\rightarrow\infty$. If $q-1=c,$ the storage efficiency
$\gamma=\frac{c\ln2}{\ln\ln n}k$ with probability 1-$o(1)$.\end{conject}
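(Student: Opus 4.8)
\medskip
\noindent\emph{Proof strategy.} The plan is to reduce the load-balancing modulation code, run on i.i.d.\ inputs, to the balls-into-bins process with the power of $l$ random choices, and then to read $\gamma$ off by substituting the known max-load bounds for that process into
\[
\gamma=\frac{E[M]\,k\log_{2}l}{n(q-1)},
\]
where $M$ is the number of rewrites before some cell is forced past level $q-1$. This parallels the derivation of Theorems~1 and~\ref{thm:gamma1} for the single-choice case; the only new ingredient is a power-of-two loading model in place of the uniform one.

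First I would make the reduction precise. Fix a rewrite $t$ and condition on the current cell state $\bar{s}_{t-1}$; then $r_{t}$, the field elements $a_{t}\neq0$ and $b_{t}$, and the candidates $\Delta x_{t}^{(0)},\dots,\Delta x_{t}^{(l-1)}$ are all determined. Two observations do the work. (i)~Because $h$ is a bijection and $y\mapsto a_{t}y+b_{t}$ is an affine bijection of the Galois field $H$, the map $x\mapsto\bigl(x_{t}^{(0)},\dots,x_{t}^{(l-1)}\bigr)$ with $x_{t}^{(i)}=h^{-1}\!\left(a_{t}h(x+il^{k})+b_{t}\right)$ carries $\mathbb{Z}_{l^{k}}$ bijectively onto a family of $l$-subsets of $\mathbb{Z}_{l^{k+1}}$ which, as $x$ ranges over $\mathbb{Z}_{l^{k}}$, partition the $n=l^{k+1}$ cell indices into $l^{k}$ blocks of size $l$ --- a perfect matching when $l=2$; hence a uniformly distributed input $x_{t}$ selects a uniformly random block. (ii)~The final step $\Delta x_{t}^{(i)}=x_{t}^{(i)}-\hat{x}_{t-1}'\bmod l^{k+1}$ is a cyclic shift of $\mathbb{Z}_{n}$, which preserves the uniform-block property and keeps the code decodable, since the decoder recovers $r_{t}$ from $\|\bar{s}_{t}\|_{1}$ and hence inverts $a_{t},b_{t}$. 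Thus the increment process is exactly this: at step $t$ a partition $\mathcal{M}_{t}$ of the cells --- depending on $t$ through $a_{t},b_{t}$ (periodically) and on the current state through $\hat{x}_{t-1}'$ --- is revealed, a uniformly random block of $\mathcal{M}_{t}$ is drawn, and its least-loaded cell is incremented.

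Granting that, for i.i.d.\ inputs, this process has the same max-load concentration as the genuine power-of-$l$-choices process --- which is precisely what the conjecture asserts --- both claims follow for $l=2$ from standard load-balancing results \cite{Mitzenmacher96thepower,load_balancing_94}. In the first regime, take $q-1=c\ln n$ and throw $m=cn\ln n$ balls: the heavily loaded power-of-two bound gives maximum load $\tfrac{m}{n}+\tfrac{\ln\ln n}{\ln2}+\Theta(1)$ with probability $1-o(1)$, so the maximum load first reaches $q-1$ after $n(q-1)-O(n\ln\ln n)$ rewrites, a block erasure is forced shortly thereafter, and $M=n(q-1)(1-o(1))$, whence $\gamma\to k\log_{2}l=k$. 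In the second regime, throw $m=cn$ balls: the lightly loaded power-of-two bound gives maximum load $\tfrac{\ln\ln n}{\ln2}+\Theta(1)$ with probability $1-o(1)$, so $q-1=\tfrac{\ln\ln n}{\ln2}(1+o(1))$ --- the slowly growing regime loosely written $q-1=c$ --- while $M=cn(1-o(1))$, whence $\gamma=\tfrac{c\ln2}{\ln\ln n}k$.

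The hard part, and the reason this is stated as a conjecture, is justifying that the deterministic, state-dependent, essentially periodic family $\{\mathcal{M}_{t}\}$ may replace genuinely independent uniform choices in the layered-induction / witness-tree analysis behind the power of two choices. A \emph{single} fixed matching will not do: it splits into $n/2$ independent two-bin subsystems, each perfectly balanced internally, so its maximum load is half the fullest subsystem's load, namely $\tfrac{m}{n}+\Theta\!\left(\sqrt{(m/n)\ln n}\right)$ --- an overshoot that is a constant \emph{fraction} of $q-1$ in the first regime rather than the vanishing $O(\ln\ln n)$ of true two-choice. So the time-variation of $\mathcal{M}_{t}$ --- the re-pairing of a heavily loaded cell with fresh, typically lighter, partners --- is doing essential work, and one would have to show that the shifts $\hat{x}_{t-1}'$ together with the base partitions decorrelate nearby matchings fast enough for the standard concentration to survive, for instance via an explicit coupling with the i.i.d.\ process. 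This transfer of concentration is the step I expect to be the real obstacle, and it is left open.
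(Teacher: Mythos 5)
Your proposal takes essentially the same route as the paper's own sketch: recognize that the affine randomization $h^{-1}(a_t h(\cdot)+b_t)$ turns each rewrite into choosing the less-loaded of a candidate pair of cells, treat the process as power-of-two-choices balls-into-bins, and substitute the known max-load bounds for $m=cn\ln n$ and $m=cn$ into $\gamma=\frac{E[M]\,k\log_{2}l}{n(q-1)}$ to obtain $\gamma=k$ and $\gamma=\frac{c\ln 2}{\ln\ln n}k$. Like the paper (which is precisely why the statement is posed as a conjecture with only a sketch), you leave unproven the transfer of the two-choice concentration result from genuinely independent choices to the state-dependent, pseudo-random matchings generated by the code; your explicit articulation of that gap, including why a single fixed matching would fail, is if anything more precise than the paper's remark that the case $m\ll n^{2}$ is ``somewhat more delicate.''
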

\begin{proof}
[Sketch of Proof] Consider the affine permutation $\pi_{x}^{(a,b)}=h^{-1}(ah(x)+b)$
for $a\in H\backslash0$ and $b\in H$. As $a,b$ vary, this permutation
maps the two elements $x_{t}$ and $x_{t}+l^{k}$ uniformly over all
pairs of cell indices. After $m=n(n-1)$ steps, we see that all pairs
of $a,b$ occur equally often. Therefore, by picking the less charged
cell, the modulation code is almost identical to the random loading
algorithm with two random choices. Unfortunately, we are interested
in the case where $m\ll n^{2}$ so the analysis is somewhat more delicate.
If $m=cn\ln n$, the highest charge level is $c\ln n-1+\frac{\ln\ln n}{\ln2}\approx c\ln n$
with probability $1-o(1)$ \cite{load_balancing_94}. Since $q-1=c\ln n$
in this case, the storage efficiency is $\gamma=\frac{cn\ln n\log_{2}2^{k}}{nc\ln n}=k$.
If $m=cn$, then $q-1=c$ and the maximum load is $c-1+\ln\ln n/\ln2\approx\frac{\ln\ln n}{\ln2}$.
By definition, we have $\frac{\ln\ln n}{\ln2}=q-1.$ Therefore, we
have $\gamma=\frac{cn\log_{2}2^{k}}{n(q-1)}=\frac{c\ln2}{\ln\ln n}k.$
\end{proof}

\begin{remrk}
If $l=2$ and $q$ is on the order of $O(\ln n),$ Conjecture \ref{thm:gamma2}
shows that the bound (\ref{eq:storage_efficiency_bound}) is achievable
by load-balancing modulation codes as $n$ goes to infinity. In this
regime, the load-balancing modulation codes provide a better constant
than self-randomized modulation codes by using twice many cells.
\end{remrk}

\begin{remrk}
\label{rem:If--is}If $l=2$ and $q$ is a constant independent of
$n$, the storage efficiency is $\gamma_{1}=c\ln\frac{k}{c}$ for
the self-randomized modulation code and $\gamma_{2}=\frac{c\ln2}{\ln\ln n}\log_{2}\frac{n}{2}$
for the load-balancing modulation code. But, the self-randomized modulation
code uses $n=2^{k}$ cells and the load-balancing modulation code
uses $n=2^{k+1}$ cells. To make fair comparison on the storage efficiency
between them, we let $n=2^{k+1}$ for both codes. Then we have $\gamma_{1}=c\ln\frac{\log_{2}n}{c}$
and $\gamma_{2}=\frac{c\ln2}{\ln\ln n}\log_{2}\frac{n}{2}$. So, as
$n\rightarrow\infty$, we see that $\frac{\gamma_{1}}{\gamma_{2}}\rightarrow0$.
Therefore, the load-balancing modulation code outperforms the self-randomized
code when $n$ is sufficiently large. 
\end{remrk}

\section{Simulation Results\label{sec:Simulation-Results}}

In this section, we present the simulation results for the modulation
codes described in Sections \ref{sub:Another-rewriting-algorithm}
and \ref{sub:An-enhanced-algorithm}. In the figures, the first modulation
code is called the {}``self-randomized modulation code'' while the
second is called the {}``load-balancing modulation code''. Let the
{}``loss factor'' $\eta$ be the fraction of cell-levels which are
not used when a block erasure is required: $\eta\triangleq1-\frac{E[R]}{n(q-1)}.$
We show the loss factor for random loading with 1 and 2 random choices
as comparison. Note that $\eta$ does not take the amount of information
per cell-level into account. Results in Fig. \ref{Flo:fig2} show
that the self-randomized modulation code has the same $\eta$ with
random loading with 1 random choice and the load-balancing modulation
code has the same $\eta$ with random loading with 2 random choices.
This shows the optimality of these two modulation codes in terms of
ball loading. %
\begin{figure}
\centering{}\includegraphics[scale=0.5]{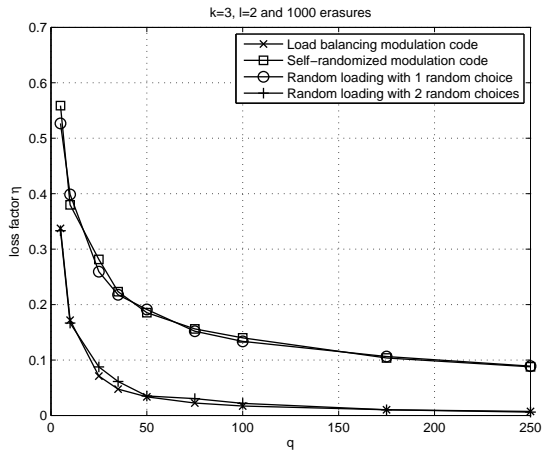}\caption{Simulation results for random loading and algorithms we proposed with
$k=3$, $l=2$ and 1000 erasures.\label{Flo:fig2}}

\end{figure}
\begin{figure}
\centering{}\includegraphics[scale=0.5]{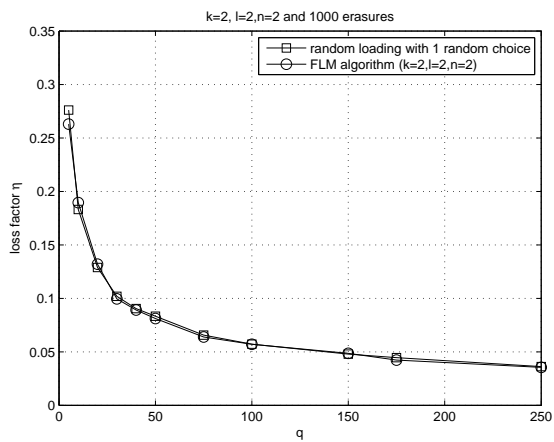}\caption{Simulation results for random loading and codes in {[}4{]} with $k=2$,
$l=2,$ $n=2$ and 1000 erasures.\label{Flo:fig4}}

\end{figure}
\begin{figure}
\centering{}\includegraphics[scale=0.5]{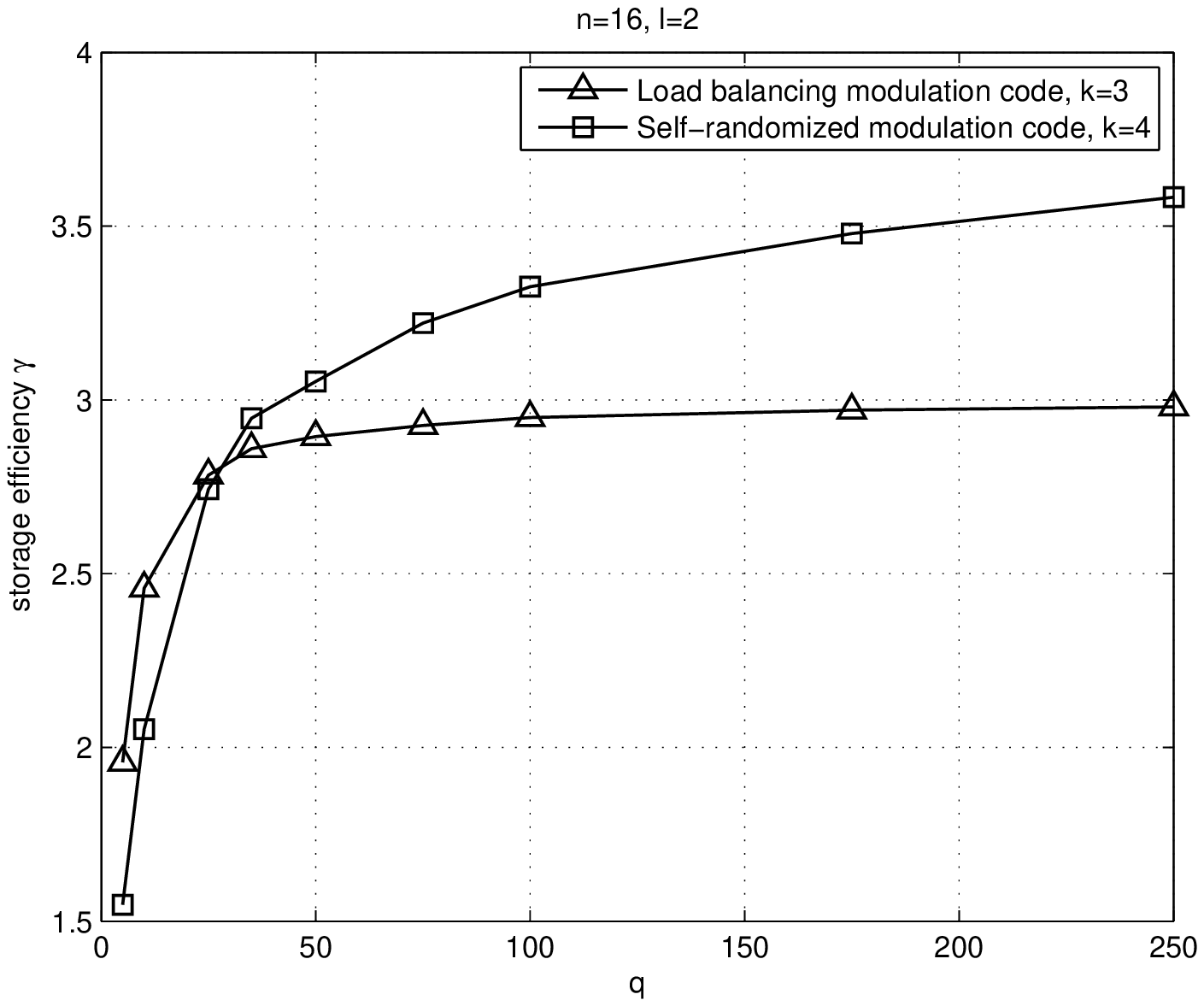}\caption{Storage efficiency of self-randomized modulation code and load-balancing
modulation code with $n=16$.\label{fig:fig5}}

\end{figure}

\begin{figure}

\begin{centering}
\includegraphics[scale=0.5]{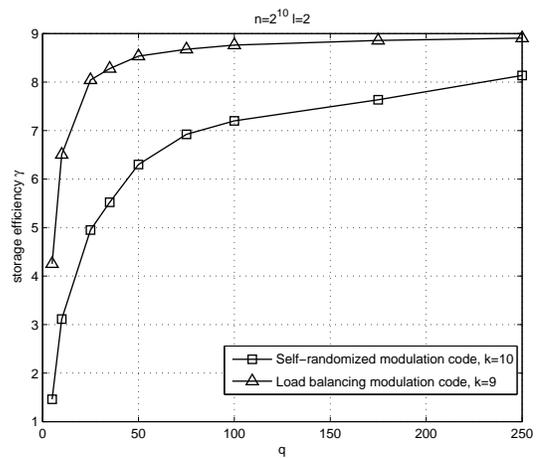}\caption{Storage efficiency of self-randomized modulation code and load-balancing
modulation code with $n=2^{10}$.\label{fig:fig6}}

\par\end{centering}

\end{figure}

We also provide the simulation results for random loading with 1 random
choice and the codes designed in \cite{Finucane_Liu_Mitzenmacher_aller08},
which we denote as FLM-($k=2,l=2,n=2$) algorithm, in Fig. \ref{Flo:fig4}.
From results shown in Fig. \ref{Flo:fig4}, we see that the FLM-($k=2,l=2,n=2$)
algorithm has the same loss factor as random loading with 1 random
choice. This can be actually seen from the proof of asymptotic optimality
in \cite{Finucane_Liu_Mitzenmacher_aller08} as the algorithm transforms
an arbitrary input distribution into an uniform distribution on the
cell-level increment. Note that FLM algorithm is only proved to be
optimal when 1 bit of information is stored. So we just compare the
FLM algorithm with random loading algorithm in this case. Fig. \ref{fig:fig5}
and Fig. \ref{fig:fig6} show the storage efficiency $\gamma$ for
these two modulation codes. Fig. \ref{fig:fig5} and Fig. \ref{fig:fig6}
show that the load-balancing modulation code performs better than
self-randomized modulation code when $n$ is large. This is also shown
by the theoretical analysis in Remark \ref{rem:If--is}.

\section{Conclusion\label{sec:Conclusion}}

\enlargethispage{-5.3in}

In this paper, we consider modulation code design problem for practical
flash memory storage systems. The storage efficiency, or average (over
the distribution of input variables) amount of information per cell-level
is maximized. Under this framework, we show the maximization of the
 number of rewrites for the the worst-case criterion \cite{Ajiang-isit07-01,Ajiang-isit07-02,Yaakobi_verdy_siegel_wolf_allerton08}
and the average-case criterion \cite{Finucane_Liu_Mitzenmacher_aller08}
are two extreme cases of our optimization objective. The self-randomized
modulation code is proposed which is asymptotically optimal for arbitrary
input distribution and arbitrary $k$ and $l$, as the number of cell-levels
$q\rightarrow\infty$. We further consider performance of practical
systems where $q$ is not large enough for asymptotic results to dominate.
Then we analyze the storage efficiency of the self-randomized modulation
code when $q$ is only moderately large. Then the load-balancing modulation
codes are proposed based on the power of two random choices \cite{load_balancing_94}
\cite{Mitzenmacher96thepower}. Analysis and numerical simulations
show that the load-balancing scheme outperforms previously proposed
algorithms.

\end{document}